\def\theequation{\arabic{section}.\arabic{equation}}
\newcommand{\be}{\begin{equation}}
\newcommand{\en}{\end{equation}}
\newcommand{\bea}{\begin{eqnarray}}
\newcommand{\ena}{\end{eqnarray}}
\newcommand{\beano}{\begin{eqnarray*}}
\newcommand{\enano}{\end{eqnarray*}}
\newcommand{\bee}{\begin{enumerate}}
\newcommand{\ene}{\end{enumerate}}
\newcommand{\mc}{\mathcal}
\newcommand{\D}{{\mc D}}
\newcommand{\Vc}{{\mc V}}
\newcommand{\Sc}{{\cal S}}
\newcommand{\E}{{\cal E}}
\newcommand{\F}{{\cal F}}
\newcommand{\G}{{\cal G}}
\newcommand{\Lc}{{\cal L}}
\newcommand{\Mc}{{\cal M}}
\newcommand{\C}{{\cal C}}
\newcommand{\1}{1 \!\! 1}
\newcommand{\Hil}{\mc H}
\newtheorem{thm}{Theorem}
\newtheorem{lemma}[thm]{Lemma}
\newtheorem{prop}[thm]{Proposition}
\newtheorem{defn}[thm]{Definition}
\newenvironment{proof}{\noindent {\bf Proof --}}{\hfill$\square$ \vspace{3mm}\endtrivlist}
\begin{document}

\thispagestyle{empty}

\vspace*{2cm}

\begin{center}
{\Large \bf $kq$-representation for pseudo-bosons, and completeness of bi-coherent states}   \vspace{2cm}\\

{\large F. Bagarello}\\
  Dipartimento di Energia, Ingegneria dell'Informazione e Modelli Matematici,\\
Scuola Politecnica, Universit\`a di Palermo,\\ I-90128  Palermo, Italy\\
and I.N.F.N., Sezione di Torino\\
e-mail: fabio.bagarello@unipa.it\\
home page: www.unipa.it/fabio.bagarello

\end{center}

\vspace*{2cm}

\begin{abstract}
\noindent We show how the Zak $kq$-representation can be adapted to deal with pseudo-bosons, and under which conditions. Then we use this representation to prove completeness of a discrete set of bi-coherent states constructed by means of pseudo-bosonic operators.
The case of Riesz bi-coherent states is analyzed in detail.
\end{abstract}

\vspace{2cm}

%{\bf PACS Numbers}:  .......

\vfill

%\pagenumbering{roman}

\newpage

\section{Introduction}

In a series of papers the notion of $\D$-pseudo bosons ($\D$-PBs) has been introduced and studied in many details. We refer to \cite{baginbagbook} for a recent review on this subject and for more references. In particular, we have analyzed the functional structure arising from two operators $a$ and $b$, acting on a Hilbert space $\Hil$ and satisfying, in a suitable sense, the pseudo-bosonic commutation rule $[a,b]=\1$. Here $\1$ is the identity operator. We have shown how two biorthogonal families of eigenvectors of two non self-adjoint, number-like, operators can easily be constructed, having real eigenvalues, and we have discussed how and when these operators are similar to a single self-adjoint number operator, and which kind of intertwining relations can be deduced. We have also seen that this settings is strongly related to physics, and in particular to $PT$ and to pseudo-hermitian quantum mechanics, \cite{ben,mosta}, since several models originally introduced in those contexts can be written in terms of $\D$-PBs.

More recently, again in connection with $\D$-PBs, the notion of bi-coherent states (BCS), originally introduced in \cite{tri}, has been considered in some of its aspects, see \cite{bagpb1,abg2015,bagbjalowieza}. Here we continue this analysis and, for that, we first introduce a generalized version of the so-called $kq$-representation, originally considered in the context of many-body theory and studied in details in several papers, \cite{zak2,zak3,jan}. We show that, extending what is done for ordinary coherent states \cite{}, this generalized $kq$-representation can be successfully used to prove that certain discrete sets of BCS  are complete in $\Hil$.

This article is organized as follows: in the next section, to keep the paper self-contained, we review few facts on $\D$-PBs. In Section \ref{sectBCS} we introduce our BCS and discuss some of their properties. Section \ref{sectBCS} also contains our generalized $kq$-representation, while its use in connection with BCS is discussed in Section \ref{sectComplBCS}, where we show that it is possible to extract two discrete sets of these vectors which are both complete in $\Hil$. In Section \ref{sectRBCS} the particular case of regular $\D$-PBs and regular BCS is considered. Our conclusions and plans for the future are given in Section \ref{sectconcl}. Our paper also contains two Appendices. In the first one, meant for the reader who is not familiar with the standard $kq$-representation, few facts on this representation are listed. The second is a technical Appendix, where an useful formula needed in the main part of the paper is proved.

\section{Modifying the CCR}

In this Section we review some of the relevant facts arising when the canonical commutation relation (CCR) $[c,c^\dagger]=\1$ are replaced by a similar commutation rule, $[a,b]=\1$, where $b$ is not the adjoint of $a$. In this case, under suitable assumptions which are often verified in concrete models, an interesting functional structure can be recovered, and the notion of coherent states can also be introduced.

\vspace{2mm}

Let $\Hil$ be a given Hilbert space with scalar product $\left<.,.\right>$ and related norm $\|.\|$. We will assume here that $\Hil$ is {\em maximal}, i.e. that, given a vector $f$, in general belonging to some vector space $\Vc$ larger than $\Hil$, if $\left<f,g\right>$ is well defined for all $g\in\Hil$, then $f$ must necessarily be in $\Hil$ as well. This is, for instance, what happens in $\Lc^2(\Bbb R)$, see \cite{rudin}, while it is not true if we consider $\Hil$ to be a closed subspace of a larger Hilbert space $\Hil_{l}$, endowed with the same scalar product $\left<.,.\right>$ of $\Hil$: in this case, in fact, the fact that $\left<f,g\right>$ is well defined for all $g\in\Hil$ does not prevent $f$ to be an element of $\Hil_l$ not necessarily belonging to $\Hil$.

\vspace{2mm}

Let $a$ and $b$ be two operators
on $\Hil$, with domains $D(a)$ and $D(b)$ respectively, $a^\dagger$ and $b^\dagger$ their adjoint, and let $\D$ be a dense subspace of $\Hil$
such that $a^\sharp\D\subseteq\D$ and $b^\sharp\D\subseteq\D$, where $x^\sharp$ is $x$ or $x^\dagger$. Of course, $\D\subseteq D(a^\sharp)$
and $\D\subseteq D(b^\sharp)$.

\begin{defn}\label{def21}
The operators $(a,b)$ are $\D$-pseudo bosonic ($\D$-pb) if, for all $f\in\D$, we have
\be
a\,b\,f-b\,a\,f=f.
\label{A1}\en
\end{defn}

\vspace{2mm}

Our  working assumptions are the following:

\vspace{2mm}

{\bf Assumption $\D$-pb 1.--}  there exists a non-zero $\varphi_{ 0}\in\D$ such that $a\,\varphi_{ 0}=0$.

\vspace{1mm}

{\bf Assumption $\D$-pb 2.--}  there exists a non-zero $\Psi_{ 0}\in\D$ such that $b^\dagger\,\Psi_{ 0}=0$.

\vspace{2mm}

It is obvious that $\varphi_0\in D^\infty(b):=\cap_{k\geq0}D(b^k)$ and that $\Psi_0\in D^\infty(a^\dagger)$, so
that the vectors \be \varphi_n:=\frac{1}{\sqrt{n!}}\,b^n\varphi_0,\qquad \Psi_n:=\frac{1}{\sqrt{n!}}\,{a^\dagger}^n\Psi_0, \label{A2}\en
$n\geq0$, can be defined and they all belong to $\D$ and, as a consequence, to the domains of $a^\sharp$, $b^\sharp$ and $N^\sharp$, where $N=ba$. We further introduce $\F_\Psi=\{\Psi_{ n}, \,n\geq0\}$ and
$\F_\varphi=\{\varphi_{ n}, \,n\geq0\}$.

It is now simple to deduce the following lowering and raising relations:
\be
\left\{
    \begin{array}{ll}
b\,\varphi_n=\sqrt{n+1}\varphi_{n+1}, \qquad\qquad\quad\,\, n\geq 0,\\
a\,\varphi_0=0,\quad a\varphi_n=\sqrt{n}\,\varphi_{n-1}, \qquad\,\, n\geq 1,\\
a^\dagger\Psi_n=\sqrt{n+1}\Psi_{n+1}, \qquad\qquad\quad\, n\geq 0,\\
b^\dagger\Psi_0=0,\quad b^\dagger\Psi_n=\sqrt{n}\,\Psi_{n-1}, \qquad n\geq 1,\\
       \end{array}
        \right.
\label{A3}\en as well as the eigenvalue equations $N\varphi_n=n\varphi_n$ and  $N^\dagger\Psi_n=n\Psi_n$, $n\geq0$. In particular, as a consequence
of these last two equations,  choosing the normalization of $\varphi_0$ and $\Psi_0$ in such a way $\left<\varphi_0,\Psi_0\right>=1$, we deduce that
\be \left<\varphi_n,\Psi_m\right>=\delta_{n,m}, \label{A4}\en
 for all $n, m\geq0$. Hence $\F_\Psi$ and $\F_\varphi$ are biorthogonal. The analogy with ordinary bosons suggests us to consider the following:

\vspace{2mm}

{\bf Assumption $\D$-pb 3.--}  $\F_\varphi$ is a basis for $\Hil$.

\vspace{1mm}

This is equivalent to requiring that $\F_\Psi$ is a basis for $\Hil$ as well, \cite{chri}. However, several  physical models suggest to adopt the following weaker version of this assumption, \cite{baginbagbook}:

\vspace{2mm}

{\bf Assumption $\D$-pbw 3.--}  For some subspace $\G$ dense in $\Hil$, $\F_\varphi$ and $\F_\Psi$ are $\G$-quasi bases.

\vspace{2mm}
This means that, for all $f$ and $g$ in $\G$,
\be
\left<f,g\right>=\sum_{n\geq0}\left<f,\varphi_n\right>\left<\Psi_n,g\right>=\sum_{n\geq0}\left<f,\Psi_n\right>\left<\varphi_n,g\right>,
\label{A4b}
\en
which can be seen as a weak form of the resolution of the identity, restricted to $\G$.
To refine further the structure, in \cite{baginbagbook} we have assumed that a self-adjoint, invertible, operator $\Theta$ exists, which leaves, together with $\Theta^{-1}$, $\D$ invariant: $\Theta\D\subseteq\D$, $\Theta^{-1}\D\subseteq\D$. Then we say that $(a,b^\dagger)$ are $\Theta-$conjugate if $af=\Theta^{-1}b^\dagger\,\Theta\,f$, for all $f\in\D$. This extends what happens for CCR, where $b=a^\dagger$ and $\Theta=\1$. One can prove that, if $\F_\varphi$ and $\F_\Psi$ are $\D$-quasi bases for $\Hil$, then the operators $(a,b^\dagger)$ are $\Theta-$conjugate if and only if $\Psi_n=\Theta\varphi_n$, for all $n\geq0$. Moreover, if $(a,b^\dagger)$ are $\Theta-$conjugate, then $\left<f,\Theta f\right>>0$ for all non zero $f\in \D$.

In Section \ref{sectRBCS}, rather than using Assumption $\D$-pbw 3 above, we will consider its following stronger version:

\vspace{2mm}

{\bf Assumption $\D$-pbs 3.--}  $\F_\varphi$ is a Riesz basis for $\Hil$.

\vspace{1mm}

In this case we call our $\D$-PBs {\em regular}\footnote{Notice that the $w$ and $s$ in the Assumptions $\D$-pbs 3 and $\D$-pbw 3 stand for {\em strong} and {\em weak}, respectively. Of course, when Assumption $\D$-pbs 3 holds, Assumptions $\D$-pb 3 and $\D$-pbw3 hold as well.}.  In this case a bounded operator $S$, with bounded inverse $S^{-1}$, exists in $\Hil$, together with an orthonormal basis $\F_e=\{e_n\in\Hil,\,n\geq0\}$, such that $\varphi_n=Se_n$, for all $n\geq0$. Then, because of the uniqueness of the  basis biorthogonal to $\F_\varphi$, it is clear that $\F_\Psi$ is also a Riesz basis for $\Hil$, and that $\Psi_n=(S^{-1})^\dagger e_n$. Hence, an operator $\Theta$ having the properties required above can be introduced as $\Theta:=(S^\dagger S)^{-1}$, at least if $\D$ is stable under the action of both $S$ and $S^{-1}$. It is clear that $\Theta$ is also bounded, with bounded inverse,  self-adjoint, positive, and that $\Psi_n=\Theta \varphi_n$, for all $n\geq0$. $\Theta$ and $\Theta^{-1}$ can both be written as a series of rank-one operators. In fact, adopting the Dirac bra-ket notation, we have
$$
\Theta=\sum_{n=0}^\infty |\Psi_n\left>\right<\Psi_n|,\qquad \Theta^{-1}=\sum_{n=0}^\infty |\varphi_n\left>\right<\varphi_n|.
$$
Of course both $|\Psi_n\left>\right<\Psi_n|$ and $|\varphi_n\left>\right<\varphi_n|$ are not projection operators\footnote{Here $\left(|f\left>\right<f|\right)g=\left<f,g\right>f$, for all $f,g\in\Hil$.} since, in general the norms of $\Psi_n$ and $\varphi_n$ are not equal to one. The series above are uniformly convergent if Assumption $\D$-pbs 3 is satisfied, while they are not, if its weaker versions, Assumption $\D$-pb 3 or $\D$-pbw 3, hold.

In several explicit models both $S^\sharp$ and $(S^{-1})^\sharp$ map $\D$ into $\D$,  and this is the reason why we have assumed this condition  here. Hence, $e_n\in\D$, for all $n$. In \cite{baginbagbook} it has also been discussed that an operator $S$ and an orthonormal basis $\F_e$ can also be introduced when our $\D$-PBs satisfy Assumption $\D$-pbw 3. In this case, however, $S$ or $S^{-1}$, or both, are unbounded.

The lowering and raising conditions in (\ref{A3}) for $\varphi_n$ can be rewritten in terms of $e_n$ as follows:
\be
S^{-1}aSe_n=\sqrt{n}\,e_{n-1},\qquad S^{-1}bSe_n=\sqrt{n+1}\,e_{n+1},
\label{a5a}\en
for all $n\geq0$. Notice that we are putting here $e_{-1}\equiv0$. Then, the first equation in (\ref{a5a}) suggests to define an operator $c$ acting on $\D$ as follows: $cf=S^{-1}aSf$. Of course, if we take $f=e_n$, we recover (\ref{a5a}). Moreover, simple computations show that $c^\dagger$ satisfies the equality $c^\dagger f=S^{-1}bS f$, $f\in\D$, which now, taking $f=e_n$, produces the second equality in (\ref{a5a}). These operators satisfy the CCR on $\D$: $[c,c^\dagger]f=f$, $\forall f\in\D$. The conclusion is that the operators $a$ and $b$ are related to a canonical pair $c$ and $c^\dagger$ by a similarity map $S$ on $\D$, which could be bounded together with its inverse, or not. We refer to \cite{baginbagbook} for several applications to physics of this framework.

\section{Bi-coherent states}\label{sectBCS}

We start recalling that, calling $W(z)=e^{zc^\dagger-\overline{z}\,c}$, a {\em standard} coherent state is the vector
\be
\Phi(z)=W(z)e_0=e^{-|z|^2/2}\sum_{k=0}^\infty \frac{z^k}{\sqrt{k!}}\,e_k.
\label{30}\en
Here $c$ and $c^\dagger$ are operators satisfying the CCR, and $\F_e$ is the orthonormal basis related to these operators in the usual way: $c\,e_0=0$, and $e_n=\frac{1}{\sqrt{n!}}\,(c^\dagger)^ne_0$, $n\geq0$. The vector $\Phi(z)$ is well defined, and normalized, for all $z\in\Bbb C$. This is just a consequence of the fact that $W(z)$ is unitary, or, alternatively, of the fact that $\left<e_k,e_l\right>=\delta_{k,l}$. Moreover,
\be
c\,\Phi(z)=z\,\Phi(z),\qquad\mbox{and}\qquad \frac{1}{\pi}\int_{\Bbb C}d^2z|\Phi(z)\left>\right<\Phi(z)|=\1.
\label{add1}\en
It is also well known that $\Phi(z)$ saturates the Heisenberg uncertainty relation, which will not be discussed in this paper.

What is interesting to us here is to show that the family of vectors $\{\Phi(z),\,z\in\Bbb C\}$ can be somehow generalized in a way that preserves similar properties, and that this generalization is related to the $\D$-pb operators $a$ and $b$ considered in the previous section.

Roughly speaking, due to the relation between $(c,c^\dagger)$ with $(a,b)$ or with $(b^\dagger,a^\dagger)$, we expect we can replace $W(z)$ with one of the following operators:
\be
U(z)=e^{zb-\overline{z}\,a},\qquad V(z)=e^{za^\dagger-\overline{z}\,b^\dagger}.
\label{31}\en
Of course, if $a=b^\dagger$, then $U(z)=V(z)$ and the operator is unitary and essentially coincide with $W(z)$, identifying $a$ with $c$. However, the case of interest here is when $a\neq b^\dagger$. This makes the situation more complicated since, in this case, neither $U(z)$ nor $V(z)$ are bounded, in general, at least when $z\neq0$. Still, in \cite{bagpb1,abg2015}, we have found conditions for the vectors
\be
\varphi(z)=U(z)\varphi_0,\qquad \Psi(z)=V(z)\,\Psi_0,
\label{32}\en
to be well defined in $\Bbb C$. This, of course, only means that $\varphi_0$ belongs to the domain of $U(z)$, $\varphi_0\in D(U(z))$, and that $\Psi_0\in D(V(z))$, for all $z\in\Bbb C$. This was proven under some assumptions on the norms of $\varphi_n$ and $\Psi_n$, see Proposition \ref{prop1} below. What we will do here is to check that these same assumptions make $U(z)$ and $V(z)$ densely defined, for all $z\in \Bbb C$. For that, we first recall our main existence result, which can be found, with some differences, in \cite{abg2015}:

\begin{prop}\label{prop1}
Let us assume that there exist four constants $r_\varphi, r_\psi>0$, and $0\leq \alpha_\varphi,\alpha_\psi<\frac{1}{2}$, such that $\|\varphi_n\|\leq r_\varphi^n(n!)^{\alpha_\varphi}$ and $\|\Psi_n\|\leq r_\psi^n(n!)^{\alpha_\psi}$,  for all $n\geq0$.

Then, for all $z\in \Bbb C$, $\varphi_0\in D(U(z))$ and $\Psi_0\in D(V(z))$. Moreover, $\varphi(z)\in D(a)$,  $\Psi(z)\in D(b^\dagger)$, and we have $a\,\varphi(z)=z\varphi(z)$ and $b^\dagger\Psi(z)=z\,\Psi(z)$, for all $z\in \Bbb C$. Finally, if $\F_\varphi$ and $\F_\Psi$ are biorthogonal bases for $\Hil$, then
\be
\left<f,g\right>=\frac{1}{\pi}\int_{\Bbb C}d^2z \left<f,\varphi(z)\right>\left<\Psi(z),g\right>=\frac{1}{\pi}\int_{\Bbb C}d^2z \left<f,\Psi(z)\right>\left<\varphi(z),g\right>,
\label{add2}\en
for all $f,g\in\Hil$. If $\F_\varphi$ and $\F_\Psi$ are $\D$-quasi bases, then equation (\ref{add2}) still holds, but for $f,g\in\D$.

\end{prop}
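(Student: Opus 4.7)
The argument naturally splits into three parts: (i) existence and convergence of $\varphi(z)$ and $\Psi(z)$; (ii) the eigenvalue equations; (iii) the weak resolution of the identity.

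For (i), I would first use the formal Baker--Campbell--Hausdorff identity to rewrite $U(z)=e^{zb-\overline{z}a}$ in normal-ordered form: since $[zb,-\overline{z}a]=|z|^2\1$ is central on $\D$, one expects $U(z)=e^{-|z|^2/2}\,e^{zb}\,e^{-\overline{z}a}$. Applying this to $\varphi_0$, using $a\varphi_0=0$ and $b^n\varphi_0=\sqrt{n!}\,\varphi_n$ from (\ref{A2}), leads to the candidate
\be
\varphi(z)=e^{-|z|^2/2}\sum_{k=0}^\infty \frac{z^k}{\sqrt{k!}}\,\varphi_k,\qquad \Psi(z)=e^{-|z|^2/2}\sum_{k=0}^\infty \frac{z^k}{\sqrt{k!}}\,\Psi_k.
\label{plan:cand}
\en
The growth hypothesis $\|\varphi_k\|\leq r_\varphi^k(k!)^{\alpha_\varphi}$ gives the majorant
\[
\sum_{k\geq0}\frac{|z|^k}{\sqrt{k!}}\,\|\varphi_k\|\;\leq\;\sum_{k\geq0}\frac{(|z|\,r_\varphi)^k}{(k!)^{1/2-\alpha_\varphi}},
\]
which is entire in $z$ precisely because $\alpha_\varphi<1/2$; the analogous bound covers $\Psi_k$. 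Hence $\varphi_0\in D(U(z))$ and $\Psi_0\in D(V(z))$ for every $z\in\Bbb C$, with $\varphi(z),\Psi(z)$ norm-convergent in $\Hil$.

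For (ii), I would apply $a$ termwise to (\ref{plan:cand}) using $a\varphi_n=\sqrt{n}\,\varphi_{n-1}$ from (\ref{A3}); a shift $k\mapsto k-1$ produces $z\varphi(z)$ formally. To pass $a$ through the infinite sum, I would verify absolute convergence of the shifted series (the extra $\sqrt{k}$ from the lowering is absorbed by $1/\sqrt{k!}$) and then invoke closedness of $a$, which is immediate from $a\subseteq(a^\dagger)^*$ together with the fact that $a^\dagger$ is densely defined on $\D$. The identity $b^\dagger\Psi(z)=z\Psi(z)$ is obtained by a symmetric computation based on $b^\dagger\Psi_n=\sqrt{n}\,\Psi_{n-1}$.

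For (iii), substituting (\ref{plan:cand}) into the integrands gives
\[
\frac{1}{\pi}\int_{\Bbb C} d^2z\,\left<f,\varphi(z)\right>\left<\Psi(z),g\right>=\frac{1}{\pi}\sum_{k,l\geq0}\frac{\left<f,\varphi_k\right>\left<\Psi_l,g\right>}{\sqrt{k!\,l!}}\int_{\Bbb C} d^2z\,e^{-|z|^2}\,z^k\overline{z}^{\,l},
\]
and the elementary Gaussian moment $\int_{\Bbb C} d^2z\,e^{-|z|^2}\,z^k\overline{z}^{\,l}=\pi\,k!\,\delta_{k,l}$ (polar coordinates) collapses the right-hand side to $\sum_k \left<f,\varphi_k\right>\left<\Psi_k,g\right>$. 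Under Assumption $\D$-pb 3 this equals $\left<f,g\right>$ for every $f,g\in\Hil$, while under the $\D$-quasi basis hypothesis the same identity holds for $f,g\in\D$ thanks to (\ref{A4b}); the companion formula follows by swapping the roles of $\varphi$ and $\Psi$. The main technical obstacle is the Fubini interchange of $\int d^2z$ and $\sum_{k,l}$, since $U(z)$ and $V(z)$ are unbounded and the partial sums are not uniformly controlled in $|z|$. I would justify it by dominated convergence on the dense subspace where $|\left<f,\varphi_k\right>\left<\Psi_l,g\right>|$ admits a Cauchy--Schwarz bound $\|f\|\|g\|\|\varphi_k\|\|\Psi_l\|$, noting that after the Gaussian integration the residual decay factors $(k!)^{-(1/2-\alpha_\varphi)}$ and $(l!)^{-(1/2-\alpha_\psi)}$ keep the absolute double sum convergent, and then extending by continuity in the biorthogonal-basis case.
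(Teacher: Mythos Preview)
Your overall strategy---BCH to obtain the explicit series, convergence from the growth bounds, termwise lowering for the eigenvalue equations, and Gaussian orthogonality for the resolution of the identity---is exactly the line the paper sketches (it says only that the first statement follows from the convergence of $\sum z^k\varphi_k/\sqrt{k!}$ and that ``the other statements are easy to prove,'' referring to \cite{abg2015}). Parts (i) and (ii) are fine and in fact more detailed than what the paper offers; invoking $a\subseteq(a^\dagger)^*$ to justify passing $a$ through the series is a clean touch.

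The weak point is your Fubini justification in (iii). The crude bound $|\langle f,\varphi_k\rangle\langle\Psi_l,g\rangle|\le\|f\|\|g\|\,\|\varphi_k\|\,\|\Psi_l\|$ combined with the \emph{absolute} Gaussian moment $\int_{\Bbb C}e^{-|z|^2}|z|^{k+l}\,d^2z=\pi\,\Gamma\!\left(\tfrac{k+l}{2}+1\right)$ does \emph{not} yield a summable double series: on the diagonal $k=l$ the generic term equals $\pi\|\varphi_k\|\,\|\Psi_k\|\le\pi\,(r_\varphi r_\psi)^k(k!)^{\alpha_\varphi+\alpha_\psi}$, which diverges as soon as $r_\varphi r_\psi\ge1$. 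There are no ``residual decay factors $(k!)^{-(1/2-\alpha_\varphi)}$'' left over---the Gaussian integral exactly cancels the $1/\sqrt{k!\,l!}$. The paper does not spell this step out either, but if you want a self-contained argument, the cleanest route is to first establish the identity for $f\in\Lc_\Psi$ and $g\in\Lc_\varphi$, where only finitely many $\langle f,\varphi_k\rangle$ and $\langle\Psi_l,g\rangle$ are nonzero and the interchange is trivial, and then treat the extension separately. That extension also needs care: when $\alpha_\varphi>0$ or $\alpha_\psi>0$ the norms $\|\varphi(z)\|$, $\|\Psi(z)\|$ can grow faster in $|z|$ than the Gaussian weight damps them, so a naive continuity argument in $f,g$ is not automatic.
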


The proof of the first statement is based on the uniform convergence of the series $\sum\frac{z^k}{\sqrt{k!}}\varphi_k$ and $\sum\frac{z^k}{\sqrt{k!}}\psi_k$, which is granted by the above bounds for $\|\varphi_n\|$ and $\|\Psi_n\|$. The other statements are easy to prove.

\vspace{2mm}

{\bf Remarks:--} (1) This Proposition extends, for $\varphi(z)$ and $\Psi(z)$, similar properties very well known for $\Phi(z)$.

(2) In several concrete models coming from $PT$-quantum mechanics we have seen that $\|\varphi_n\|$ and $\|\Psi_n\|$ diverge with $n$. This prevents the sets $\F_\varphi$ and $\F_\Psi$ to be bases for $\Hil$, but they can be still used to construct BCS as in (\ref{32}) because the divergences of the norms is slower than that admitted in Proposition \ref{prop1}, \cite{abg2015}.

(3) In \cite{bagintop2016} the extension of these BCS to the non-linear situation has also been considered. In this case, the set of natural numbers is replaced by a more general set $\{\epsilon_n\}$, with  $0=\epsilon_0<\epsilon_1<\epsilon_2<\cdots$. The main difference is that the bounds required to the norms of $\varphi_n$ and $\Psi_n$ are now replaced by $\|\varphi_n\|\leq r_\varphi^n(\epsilon_n!)^{\alpha_\varphi}$ and $\|\Psi_n\|\leq r_\Psi^n(\epsilon_n!)^{\alpha_\Psi}$, where $\epsilon_0!=1$ and $\epsilon_n!=\epsilon_1\epsilon_2\cdots\epsilon_n$, $n\geq1$. In this case the BCS can only be defined in some suitable region in $\Bbb C$, rather than in the whole complex plane as in Proposition \ref{prop1}. Also, a resolution of the identity can be found if a certain moment problem related to the $\epsilon_n$'s can be solved. However, this non-linear version of BCS is not relevant for us, since these states are not, in general, eigenstates of the pseudo-bosonic number operators. We refer the interested reader to \cite{bagintop2016} for more details.

\vspace{2mm}

Proposition \ref{prop1} only show that $U(z)$ and $V(z)$ can be applied to two different vectors, $\varphi_0$ and $\Psi_0$. But it is important to stress that, in fact, under the same bounds for $\|\varphi_n\|$ and $\|\Psi_n\|$, both operators are densely defined. Of course, when they are bounded, see Section \ref{sectRBCS}, they are defined in all of $\Hil$, but this is not true in general. To see that $D(U(z))$ and $D(V(z))$ are dense in $\Hil$, we first introduce the following sets:
$$
\Lc_\varphi=l.s.\{\varphi_n\}, \qquad \Lc_\Psi=l.s.\{\Psi_n\}.
$$
These sets are both dense in $\Hil$ in any of the Assumption $\D$-pb 3 (pb3, pbs 3 or pbw 3) considered in Section \ref{sectBCS}. Let us now introduce the operators $\sigma_1=e^{\gamma a}$ and $\sigma_2=e^{\gamma b}$, for some $\gamma\in \Bbb C$, as the following formal (for the moment!) series: $\sigma_1=\sum_{k=0}^\infty\frac{(\gamma\, a)^k}{k!}$ and $\sigma_2=\sum_{k=0}^\infty\frac{(\gamma\, b)^k}{k!}$. Then, both these operators are densely defined. More explicitly:

\begin{lemma}
$D(\sigma_1)\supseteq \Lc_\varphi$. Moreover, if $\|\varphi_n\|\leq r_\varphi^n(n!)^{\alpha_\varphi}$ for all $n\geq0$, for some positive $r_\varphi$ and some $\alpha_\varphi\in\left[0,\frac{1}{2}\right[$, then $D(\sigma_2)\supseteq \Lc_\varphi$.
\end{lemma}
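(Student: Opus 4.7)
The plan is to translate the domain question for each formal exponential series into a scalar series estimate via the ladder relations (\ref{A3}). Since $\Lc_\varphi$ consists of finite linear combinations of the $\varphi_n$, by linearity it will suffice to check that each single $\varphi_n$ lies in the appropriate domain.

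For $\sigma_1=e^{\gamma a}$ no norm hypothesis should be required, because $a$ acts as a lowering operator on the $\varphi_n$: iterating $a\,\varphi_n=\sqrt{n}\,\varphi_{n-1}$ yields $a^k\varphi_n=\sqrt{n!/(n-k)!}\,\varphi_{n-k}$ for $0\leq k\leq n$ and $a^k\varphi_n=0$ for $k>n$. Consequently the candidate series $\sigma_1\varphi_n=\sum_{k\geq0}\frac{\gamma^k}{k!}\,a^k\varphi_n$ collapses to a finite sum, hence is automatically well defined in $\Hil$. This gives $\varphi_n\in D(\sigma_1)$ for every $n$, and by linearity $\Lc_\varphi\subseteq D(\sigma_1)$ unconditionally.

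For $\sigma_2=e^{\gamma b}$ the situation is more delicate, since $b$ raises. Iterating $b\,\varphi_n=\sqrt{n+1}\,\varphi_{n+1}$ gives $b^k\varphi_n=\sqrt{(n+k)!/n!}\,\varphi_{n+k}$, so one is faced with an honest infinite series $\sigma_2\varphi_n=\sum_{k\geq0}\frac{\gamma^k}{k!}\sqrt{(n+k)!/n!}\,\varphi_{n+k}$ in $\Hil$, whose norm convergence must be established. I would dominate it termwise using the hypothesis $\|\varphi_m\|\leq r_\varphi^m(m!)^{\alpha_\varphi}$, obtaining the scalar majorant $\frac{r_\varphi^n}{\sqrt{n!}}\sum_{k\geq0}\frac{(|\gamma|r_\varphi)^k}{k!}\bigl((n+k)!\bigr)^{\alpha_\varphi+1/2}$, and then reduce the lemma to the convergence of this single numerical series.

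The main and essentially only obstacle is the convergence check for that scalar series, and it is exactly here that the restriction $\alpha_\varphi<1/2$ is used. A ratio-test computation gives the quotient of consecutive terms as $\frac{|\gamma|r_\varphi}{k+1}(n+k+1)^{\alpha_\varphi+1/2}$, which for large $k$ behaves like $|\gamma|r_\varphi\,k^{\alpha_\varphi-1/2}$; since $\alpha_\varphi-\tfrac12<0$, this quotient tends to zero, so the series converges. Absolute convergence in norm yields convergence of the vector series in $\Hil$, hence $\varphi_n\in D(\sigma_2)$, and linearity finishes the second inclusion. Morally, the threshold $\alpha_\varphi<\tfrac12$ is precisely what balances the combinatorial growth $\sqrt{(n+k)!}$ generated by iterated applications of $b$ against the $1/k!$ weights of the exponential series — the same mechanism that underlies Proposition~\ref{prop1}.
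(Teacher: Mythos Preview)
Your proof is correct and follows exactly the same approach as the paper: the lowering action of $a$ collapses $\sigma_1\varphi_n$ to a finite sum, and for $\sigma_2\varphi_n$ you derive the same majorant $\frac{r_\varphi^n}{\sqrt{n!}}\sum_{k\geq0}\frac{(|\gamma|r_\varphi)^k}{k!}\bigl((n+k)!\bigr)^{1/2+\alpha_\varphi}$. Your ratio-test verification of its convergence is actually more detailed than the paper, which simply asserts convergence from $\alpha_\varphi<\tfrac12$.
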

\begin{proof}
The first statement is due to the fact that $a$ acts as a lowering operator on  the vectors of $\F_\varphi$. Then, $e^{\gamma a}\varphi_n$ is just a sum of $n+1$ contributions for all values of $\gamma$, which is clearly a well defined vector of $\Hil$. Therefore each finite linear combination of the $\varphi_n$'s belongs to $D(\sigma_1)$. Hence $D(\sigma_1)\supseteq \Lc_\varphi$. More than this: $\Lc_\varphi$ is stable under the action of $\sigma_1$: $\sigma_1: \Lc_\varphi\rightarrow \Lc_\varphi$.

The situation is completely different for $\sigma_2$, since $b$ behaves as a raising operator on $\F_\varphi$. In fact, we can check that
$$
\sigma_2\varphi_n=\sum_{k=0}^\infty \frac{\gamma^k}{k!}\sqrt{\frac{(n+k)!}{n!}}\,\varphi_{n+k},
$$
for all $n\geq0$. Using now the bound on $\|\varphi_n\|$ we find that
$$
\|\sigma_2\varphi_n\|\leq \frac{r_\varphi^n}{\sqrt{n!}}\sum_{k=0}^\infty (|\gamma|r_\varphi)^k
\frac{[(n+k)!]^{1/2+\alpha_\varphi}}{k!},
$$
which is surely convergent since $\alpha_\varphi<\frac{1}{2}$. Then $\|\sigma_2f\|<\infty$ for all $f\in\Lc_\varphi$.

\end{proof}

{\bf Remark:--} in a similar way we can prove that $D(e^{\gamma a^\dagger})\supseteq \Lc_\Psi$ and that $D(e^{\gamma b^\dagger})\supseteq \Lc_\Psi$, for all $z\in\Bbb C$, at least if $\|\Psi_n\|\leq r_\Psi^n(n!)^{\alpha_\Psi}$, for some positive $r_\Psi$ and some $\alpha_\Psi\in\left[0,\frac{1}{2}\right[$.

\vspace{3mm}

It is now possible to use this Lemma, and the Baker-Campbell-Hausdorff formula, to define the operators $T_1(\alpha)=e^{i\alpha \hat x}$ and $T_2(\beta)=e^{-i\beta \hat p}$, where, in analogy with the standard definitions, $\hat x$ and $\hat p$ are related to the pseudo-bosonic lowering and raising operators $a$ and $b$  as follows:
\be
\hat xf=\frac{a+b}{\sqrt{2}}\,f,\qquad  \hat pf=\frac{a-b}{\sqrt{2}\,i}\,f,
\label{51}\en
 for all $f\in \D$.
For all $f\in\Lc_\varphi$, and for all real $\alpha$ and $\beta$, we put
\be
e^{i\alpha \hat x} f=e^{-\alpha^2/4}e^{i\alpha/\sqrt{2}b}e^{i\alpha/\sqrt{2}a} f,\qquad
e^{-i\beta \hat p} f=e^{-\beta^2/4}e^{\beta/\sqrt{2}b}e^{-\beta/\sqrt{2}a} f.
\label{52}\en
The order here is important: for what discussed before, both $e^{i\alpha/\sqrt{2}a} f$ and $e^{-\beta/\sqrt{2}a} f$  belong to $\Lc_\varphi$, for all $f\in\Lc_\varphi$. Hence we can act on these vectors with $e^{i\alpha/\sqrt{2}b}$ or with $e^{\beta/\sqrt{2}b}$ getting well defined vectors in $\Hil$.

In a similar way we can prove that the operators $e^{-i\alpha \hat x^\dagger}$ and $e^{i\beta \hat p^\dagger}$ are densely defined, since they are both defined on $\Lc_\Psi$. With a little abuse of language, based on the fact that explicit computations show that
\be
\left<e^{-i\alpha \hat x^\dagger}g,f\right>=\left<g,e^{i\alpha \hat x}f\right>, \quad \left<e^{i\beta \hat p^\dagger}g,f\right>=\left<g,e^{-i\beta \hat p}f\right>,
\label{add4}\en
for all $f\in \Lc_\varphi$ and $g\in\Lc_\Psi$, we will call $e^{-i\alpha \hat x^\dagger}$ and $e^{i\beta \hat p^\dagger}$ the {\em adjoints} of $T_1(\alpha)$ and $T_2(\beta)$, and we will simply write $T_1^\dagger(\alpha)=e^{-i\alpha \hat x^\dagger}$ and $T_2^\dagger(\beta)=e^{i\beta \hat p^\dagger}$.

\subsection{An interlude: generalized eigenstates of $\hat x$}\label{sectkqrepr}

We recall that, calling $\hat x_0$ the self adjoint position operator $\hat x_0=\frac{c+c^\dagger}{\sqrt{2}}$, it is possible to introduce its {\em generalized eigenstates} $\xi_x$ satisfying $\hat x_0\,\xi_x=x\,\xi_x$, $\left<\xi_x,\xi_y\right>=\delta(x-y)$, and $\int_{\Bbb R}dx|\xi_x\left>\right<\xi_x|=\1$, using a standard notation.  Of course, $\xi_x$ is not a square integrable function, but can be seen as an element of $\Sc'(\Bbb R)$, the space of tempered distributions on $\Bbb R$, \cite{gieres}.

The role of $\xi_x$ is quite important in kq-representation and, more in general, in representation theory of quantum mechanics, \cite{mess,merz}. For this reason, it is interesting to see what can be extended to our situation. To achieve this aim, we will first propose a general settings, adopting some useful assumptions, and then we will show that these assumptions are indeed satisfied in some relevant situations.

Let $x\in\Bbb R$ labels a tempered distribution $\eta_x\in \Sc'(\Bbb R)$,  and let $\F_\eta$ be the set of all these distributions: $\F_\eta=\{\eta_x,\,x\in\Bbb R\}$.

\begin{defn}\label{defwb}

 $\F_\eta$ is called {\em well-behaved} if:

 1. each $\eta_x$ is a generalized eigenstate of $\hat x$: $\hat x\eta_x=x\eta_x$, for all $x\in\Bbb R$;

 2. a second family of generalized vectors $\F^\eta=\{\eta^x\in\Sc'(\Bbb R),\,x\in \Bbb R\}$ exists such that $\left<\eta_x,\eta^y\right>=\delta(x-y)$ and $\int_{\Bbb R}dx|\eta_x\left>\right<\eta^x|=\int_{\Bbb R}dx|\eta^x\left>\right<\eta_x|=\1$.

\end{defn}

Both $\F_\eta$ and $\F^\eta$ are complete in $\Hil$: if $f\in\Hil$ is such that $\left<\eta_x,f\right>=0$ $\forall x\in \Bbb R$, then since $f=\int_{\Bbb R}dx\left<\eta_x,f\right>\eta^x$, it follows that $f=0$. Analogously, $f=0$ if $\left<\eta^x,f\right>=0$ $\forall x\in \Bbb R$.

\vspace{2mm}

{\bf Remark:--} It is clear that the resolutions of the identity above extend to generalized vectors of $\F_\eta$ and $\F^\eta$. In fact, for instance, we have
$$
\left(\int_{\Bbb R}dx|\eta^x\left>\right<\eta_x|\right)\eta^y=\int_{\Bbb R}dx\left<\eta_x,\eta^y\right>\eta^x=\int_{\Bbb R}dx\,\delta(x-y)\eta^x=\eta^y.
$$
We call $\Mc$ the set of (generalized) vectors for which $\F_\eta$ and $\F^\eta$ produce similar resolutions. For what we have just seen, $\Mc\supset\Hil$. Of course, both $\F_\eta$ and $\F^\eta$ are complete in $\Mc$.

\vspace{2mm}

With this in mind it is possible to prove that, if $\hat x^\dagger\eta^x\in\Mc$,  $\eta^x$ is a generalized eigenstate of $\hat x^\dagger$:
\be
\hat x^\dagger\eta^x=x\,\eta^x,
\label{59}\en
for all $x\in\Bbb R$. In fact, we have
$$
\left<\eta_y,\hat x^\dagger\eta^x\right>=\left<\hat x\eta_y,\eta^x\right>=y\delta(x-y)=x\delta(x-y)=x\left<\eta_y,\eta^x\right>,
$$
so that $\left<\eta_y,\hat x^\dagger\eta^x-x\eta^x\right>=0$ for all $x,y\in\Bbb R$. Then, using completeness of $\F_\eta$ in $\Mc$  we conclude that $\hat x^\dagger\eta^x-x\eta^x=0$.

\vspace{2mm}

{\bf Example:--} As an example we consider the non self-adjoint shifted harmonic oscillator, \cite{baginbagbook,bagharmosc}, where the $\D$-pb operators are defined as $a=c+\alpha\1$ and $b=c^\dagger+\overline{\beta}\,\1$, and the Hamiltonian is $H=ba$ which is clearly non self-adjoint if $\alpha\neq\beta$. Since $\hat x=\frac{a+b}{\sqrt{2}}=\hat x_0+\delta\1$, where $\delta=\frac{\alpha+\overline{\beta}}{\sqrt{2}}$, we see that the eigenstates of $\hat x$ are $\eta_x=\xi_{x-\delta}$, while $\eta^x=\xi_{x-\overline{\delta}}$ are the eigenstates of $\hat x^\dagger$. This is true, of course, if the eigenvalue equation $\hat x_0\,\xi_x=x\,\xi_x$ can be extended (in some non trivial way) to complex $x$. In this case, it is possible to check that $\left<\eta_x,\eta^y\right>=\delta(x-y)$ and that the resolution of the identity holds true: $\int_{\Bbb R}dx\left<f,\eta_x\right>\left<\eta^x,g\right>=\left<f,g\right>$, for $f,g\in\Hil$.

\vspace{2mm}

For each $f\in\Hil$ we can now define two functions as follows:
\be
f^\uparrow(x):=\left<\eta^x,f\right>,\qquad f^\downarrow(x):=\left<\eta_x,f\right>.
\label{53}\en
The first remark is that each pair $(f^\uparrow(x),g^\downarrow(x))$ is {\em compatible}, for any $f,g\in\Hil$: in other words, their product $\overline{f^\uparrow(x)}\,g^\downarrow(x)$ belongs to $\Lc^1(\Bbb R)$. In fact,
$$
\int_{\Bbb R}dx\overline{f^\uparrow(x)}\,g^\downarrow(x)=\int_{\Bbb R}dx\left<f,\eta^x\right>\left<\eta_x,g\right>=\left<f,g\right>,
$$
which is finite since $f,g\in\Hil$. This, of course, does not imply that $f^\uparrow(x)$ and $g^\downarrow(x)$ are both square-integrable. In principle, it could well be that $f^\uparrow(x)\in \Lc^p(\Bbb R)$ and $g^\downarrow(x)\in \Lc^q(\Bbb R)$, with $p^{-1}+q^{-1}=1$. However, it is possible to find a necessary and sufficient condition for both $f^\uparrow(x)$ and $g^\downarrow(x)$ to be square-integrable. For that, we need to introduce the operators $S_\eta$ and $S^\eta$ defined as follows: we introduce first
$$
D(S_\eta)=\left\{f\in\Hil: \int_{\Bbb R}dx\left<\eta_x,f\right>\eta_x\in\Hil \right\}, \quad D(S^\eta)=\left\{f\in\Hil: \int_{\Bbb R}dx\left<\eta^x,f\right>\eta^x\in\Hil \right\},
$$
and then we put
\be
S_\eta f=\int_{\Bbb R}dx\left<\eta_x,f\right>\eta_x, \qquad S^\eta g=\int_{\Bbb R}dx\left<\eta^x,g\right>\eta^x
\label{54}\en
for all $f\in D(S_\eta)$ and $g\in D(S^\eta)$. It is important, in what follows, to require that $D(S_\eta)$ and $D(S^\eta)$ are (at least) dense in $\Hil$.
\vspace{2mm}

{\bf Remark:--} of course, this is the case when $\hat x=\hat x_0$, i.e. when $\D$-PBs are ordinary bosons. In fact, when this is so, $\eta_x=\eta^x=\xi_x$, see Appendix 1, and  $D(S_\eta)=D(S^\eta)=\Hil$. We will see in Section \ref{sectRBCS} that this is not the only case where the density of these sets can be proved.

\vspace{2mm}

It should be remarked that $S_\eta$ and $S^\eta$ can also act outside their domains $D(S_\eta)$ and $D(S^\eta)$ since, for instance,
\be
S_\eta \eta^x=\eta_x,\qquad S^\eta \eta_x=\eta^x,
\label{55}\en
for all $x\in\Bbb R$. Then we have $S_\eta S^\eta\eta_x=\eta_x$, and $S^\eta S_\eta\eta^x=\eta^x$: $S_\eta$ is a {\em sort of inverse}
of $S^\eta$ on $\Mc$. It may be convenient to write these operators in a bra-ket form as
$$
S_\eta=\int_{\Bbb R}dx|\eta_x\left>\right<\eta_x|, \qquad S^\eta=\int_{\Bbb R}dx|\eta^x\left>\right<\eta^x|.
$$
Let us now introduce the sets
$$
\hat D(S_\eta)=\left\{f\in\Hil: \int_{\Bbb R}dx\left|f^\downarrow(x)\right|^2<\infty \right\}, \quad \hat D(S^\eta)=\left\{f\in\Hil: \int_{\Bbb R}dx\left|f^\uparrow(x)\right|^2<\infty \right\}.
$$

These sets are, at a first sight, different from $D(S_\eta)$ and $D(S^\eta)$. However the following results show that this is not so:
\begin{lemma} $f\in D(S_\eta)$ if and only if $ f\in \hat D(S_\eta)$. Analogously, $g\in D(S^\eta)$ if and only if $g\in \hat D(S^\eta)$.
\end{lemma}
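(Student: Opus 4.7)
The approach is to prove the equivalence for $S_\eta$ in detail; the one for $S^\eta$ follows by the symmetric argument, swapping $\F_\eta\leftrightarrow\F^\eta$ and $\downarrow\leftrightarrow\uparrow$ throughout. The main ingredients are the biorthogonality $\langle \eta_x,\eta^y\rangle=\delta(x-y)$, the two resolutions of the identity on $\Mc$, and the maximality of $\Hil$ recalled in Section II.

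For the easy inclusion $D(S_\eta)\subseteq\hat D(S_\eta)$, I take $f\in D(S_\eta)$ and set $h:=S_\eta f=\int_{\Bbb R}dx\,f^\downarrow(x)\eta_x$, which lies in $\Hil$ by hypothesis. Pairing with $f$, pulling $f$ inside the integral, and using $\langle f,\eta_x\rangle=\overline{f^\downarrow(x)}$ gives
$$\langle f,h\rangle=\int_{\Bbb R}|f^\downarrow(x)|^2\,dx.$$
Since $f,h\in\Hil$, the left-hand side is finite by Cauchy--Schwarz, so $\int|f^\downarrow|^2\,dx<\infty$ and $f\in\hat D(S_\eta)$.

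For the reverse inclusion $\hat D(S_\eta)\subseteq D(S_\eta)$, I assume $f^\downarrow\in\Lc^2(\Bbb R)$ and consider the candidate $h:=\int_{\Bbb R}dx\,f^\downarrow(x)\eta_x$, viewed a priori as an element of the enveloping space $\Vc\supset\Hil$ (equivalently of $\Sc'(\Bbb R)$). The task is to show $h\in\Hil$, and for this I invoke the maximality of $\Hil$: it suffices to check that $\langle h,g\rangle$ is well defined for every $g\in\Hil$. Expanding $g=\int_{\Bbb R}dy\,g^\downarrow(y)\eta^y$ via the resolution $\int|\eta^y\rangle\langle\eta_y|\,dy=\id$ and applying biorthogonality, the formal computation collapses the double integral to
$$\langle h,g\rangle=\int_{\Bbb R}\overline{f^\downarrow(x)}\,g^\downarrow(x)\,dx.$$
For $g$ in the dense subspace $D(S_\eta)$, Step 1 already yields $g^\downarrow\in\Lc^2(\Bbb R)$, and Cauchy--Schwarz bounds the integral by $\|f^\downarrow\|_2\|g^\downarrow\|_2<\infty$. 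Thus $\langle h,g\rangle$ is well defined on a dense subset of $\Hil$, and the maximality hypothesis then forces $h\in\Hil$, i.e.\ $f\in D(S_\eta)$.

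The hard part will be the last step: the estimate produced by Cauchy--Schwarz controls $\langle h,g\rangle$ by $\|g^\downarrow\|_2$, not by $\|g\|$, so one cannot simply invoke Riesz representation --- that would amount to $S_\eta$ being bounded, which need not hold outside the Riesz setting of Section \ref{sectRBCS}. The decisive move is therefore the explicit appeal to maximality, which by design rules out the possibility that the distribution $h$ sits strictly outside $\Hil$ once all pairings against $\Hil$-vectors are finite. I anticipate this is precisely where the paper's standing assumption that $\Hil$ is maximal (as $\Lc^2(\Bbb R)$ is) gets its only essential use in the lemma.
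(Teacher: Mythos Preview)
Your argument follows essentially the same route as the paper: the easy inclusion $D(S_\eta)\subseteq\hat D(S_\eta)$ via the identity $\int|f^\downarrow|^2\,dx=\langle S_\eta f,f\rangle$, and then the reverse inclusion by producing a finite pairing and invoking maximality of $\Hil$. The only methodological difference is that you compute $\langle h,g\rangle=\int\overline{f^\downarrow}\,g^\downarrow\,dx$ directly through the resolution of the identity, whereas the paper reaches the bilinear form from the quadratic form $\langle S_\eta f,f\rangle$ via the polarization identity. Both lead to the same place.

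There is, however, one step where your argument stops short of the paper's. You show $\langle h,g\rangle$ is finite only for $g$ in the dense subspace $D(S_\eta)$ and then write ``the maximality hypothesis then forces $h\in\Hil$''. But maximality, as formulated in Section~II, requires $\langle h,g\rangle$ to be well defined for \emph{all} $g\in\Hil$, not merely for $g$ in a dense subset. The paper inserts an intermediate step here: after getting $\langle S_\eta f,g\rangle$ for $f,g\in\hat D(S_\eta)$, it invokes ``continuity of the scalar product'' together with density of $\hat D(S_\eta)$ to extend to arbitrary $G\in\Hil$, and \emph{then} appeals to maximality. You correctly diagnose (in your last paragraph) that the Cauchy--Schwarz bound controls $\langle h,g\rangle$ by $\|g^\downarrow\|_2$ rather than $\|g\|$, which is exactly why an extra extension step is needed before maximality can be applied; but you do not supply that step. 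So the gap you identify is real, and the paper's proof addresses it explicitly (if somewhat informally) where yours does not.
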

\begin{proof}
Using (\ref{54}) it is clear that, if  $f\in D(S_\eta)$, then $\int_{\Bbb R}dx\left|f^\downarrow(x)\right|^2<\infty$, so that $f\in \hat D(S_\eta)$. Hence $D(S_\eta)\subseteq\hat D(S_\eta)$.  To prove the converse we first observe that, if $f, g\in \hat D(S_\eta)$, then $\alpha f+\beta g\in \hat D(S_\eta)$  for all choices of complex $\alpha$ and $\beta$: $\int_{\Bbb R}dx\left|\alpha f^\downarrow(x)+\beta g^\downarrow(x)\right|^2<\infty$. Then since $\int_{\Bbb R}dx\left|f^\downarrow(x)\right|^2=\left<S_\eta\ f, f\right><\infty$, for all $f\in \hat D(S_\eta)$, using the polarization identity we conclude that $\left<S_\eta f,g\right>$ is well defined for all $f,g\in \hat D(S_\eta)$. Now, since $D(S_\eta)$ is dense in $\Hil$, and since we have just shown that $D(S_\eta)\subseteq\hat D(S_\eta)$,  $\hat D(S_\eta)$ is also dense in $\Hil$. Then, using the continuity of the scalar product, we conclude that $\left<S_\eta f,G\right>$ is well defined for all $G\in\Hil$, and for any $f\in \hat D(S_\eta)$. Hence, because of the maximality of $\Hil$, $S_\eta f\in\Hil$. This means that $f\in D(S_\eta)$, as we had to prove.

The equality $D(S^\eta)=\hat D(S^\eta)$ can be proved in a similar way.

\end{proof}

A consequence of this Lemma is the following result

\begin{prop}\label{prop2}
The functions $f^\downarrow(x)\in\Lc^2(\Bbb R)$ for each $f\in\Hil$ if and only if $S_\eta\in B(\Hil)$. Analogously, $f^\uparrow(x)\in\Lc^2(\Bbb R)$ for each $f\in\Hil$ if and only if $S^\eta\in B(\Hil)$.
\end{prop}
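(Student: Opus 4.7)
The plan is to leverage the preceding Lemma, which identifies $\hat D(S_\eta)$ with $D(S_\eta)$, together with the classical Hellinger--Toeplitz theorem. The forward implication is almost automatic: if $S_\eta\in B(\Hil)$ then $D(S_\eta)=\Hil$, so the Lemma gives $\hat D(S_\eta)=\Hil$, which by definition of $\hat D(S_\eta)$ means $f^\downarrow\in\Lc^2(\Bbb R)$ for every $f\in\Hil$.

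For the converse I would assume $f^\downarrow\in\Lc^2(\Bbb R)$ for all $f\in\Hil$. Then $\hat D(S_\eta)=\Hil$, and the Lemma forces $D(S_\eta)=\Hil$, so $S_\eta$ is an everywhere-defined linear map on $\Hil$. The next step is to verify that $S_\eta$ is symmetric on its domain: using (\ref{54}) and Fubini,
\begin{equation*}
\left\langle S_\eta f,g\right\rangle=\int_{\Bbb R}dx\,\overline{\langle\eta_x,f\rangle}\,\langle\eta_x,g\rangle=\int_{\Bbb R}dx\,\overline{f^\downarrow(x)}\,g^\downarrow(x)=\left\langle f,S_\eta g\right\rangle
\end{equation*}
for every $f,g\in D(S_\eta)$, where the last equality is obtained by conjugating the analogous expression with $f$ and $g$ exchanged. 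With $S_\eta$ symmetric and everywhere defined, the Hellinger--Toeplitz theorem yields $S_\eta\in B(\Hil)$. The argument for $S^\eta$ and $f^\uparrow$ is identical, with the roles of $\F_\eta$ and $\F^\eta$ interchanged.

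The only point that calls for some care is the symmetry identity, because each $\eta_x$ lives in $\Sc'(\Bbb R)$ rather than in $\Hil$; however, once $f\in D(S_\eta)$ the vector $S_\eta f$ is a genuine element of $\Hil$, so the pairing $\langle S_\eta f,g\rangle$ is an ordinary Hilbert-space scalar product and the distributional nature of the $\eta_x$ enters only through the scalar-valued functions $f^\downarrow, g^\downarrow$, which the $\Lc^2$ hypothesis already places under control. Apart from this bookkeeping, the proposition reduces to a one-line application of Hellinger--Toeplitz, so I do not expect any genuine obstacle.
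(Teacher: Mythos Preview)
Your argument is correct and follows essentially the same route as the paper: both directions hinge on the identity $\langle S_\eta f,f\rangle=\|f^\downarrow\|_{\Lc^2}^2$, and the converse amounts to showing that $S_\eta$ is everywhere defined and then concluding boundedness. The only cosmetic difference is that you invoke the preceding Lemma and name Hellinger--Toeplitz explicitly, whereas the paper re-runs the polarization-plus-maximality argument inline and leaves the step ``everywhere defined $\Rightarrow$ bounded'' implicit; your version is, if anything, slightly cleaner on that point.
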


\begin{proof}
Let us first assume that $S_\eta\in B(\Hil)$. Then, since
$$
\|f^\downarrow\|^2_{\Lc^2}=\int_{\Bbb R}dx \left<f,\eta_x\right>\left<\eta_x,f\right>=\left<S_\eta f,f\right>\leq \|S_\eta\| \|f\|_{\Hil}^2,
$$
it follows that $f^\downarrow(x)\in\Lc^2(\Bbb R)$. Notice that we are using different notations for the norm of $f$ and $f^\downarrow$, since they live in different spaces.

Let us now suppose that $f^\downarrow(x)\in\Lc^2(\Bbb R)$. Hence, since $\left<S_\eta f,f\right>=\|f^\downarrow\|^2_{\Lc^2}$, it follows that $\left<S_\eta f,f\right>$ is finite for all $f\in \Hil$. Then, using the polarization identity, $\left<S_\eta f,g\right>$ is well defined for all $f,g\in\Hil$. Hence the maximality of $\Hil$ implies that $S_\eta f\in \Hil$, for all $f\in\Hil$. This means that $S_\eta$ is everywhere defined in $\Hil$, so that it must be bounded.

A similar proof can be repeated for our second claim.

\end{proof}

\subsection{Back to the exponential operators}

Let us now go back to the operators in (\ref{52}), and to their adjoints. For our purposes, it will be sufficient to take $\alpha=\beta$, but we will constraint $\alpha$ to satisfy the equality $\alpha^2=2\pi L$, for some $L=1,2,3,\ldots$. In fact, in this case, the two operators $T_1=e^{i\alpha\hat x}$ and $T_2=e^{-i\alpha \hat p}$ commute. Of course, commutation should be understood in the sense of unbounded operators, i.e., as
\be
\left<T_1f,T_2^\dagger g\right>= \left<T_2f,T_1^\dagger g\right>,
\label{56}\en
for all $f\in \Lc_\varphi$ and $g\in\Lc_\Psi$. If, in particular, a set $\mathfrak D$ dense in $\Hil$ exists which is left stable under the action of $T_j^\sharp$, $j=1,2$, then rather than (\ref{56}) we can write $T_1(T_2f)=T_2(T_1f)$, for all $f\in \mathfrak D$.

Of course $T_1$ and $T_2$ admit inverse, and the inverses are $T_1^{-1}=e^{-i\alpha\hat x}$ and $T_2^{-1}=e^{i\alpha \hat p}$, whose (formal) adjoints are $(T_1^{-1})^\dagger=e^{i\alpha\hat x^\dagger}=(T_1^\dagger)^{-1}$ and $(T_2^{-1})^\dagger=e^{-i\alpha \hat p^\dagger}=(T_2^\dagger)^{-1}$, with the same care we used in formula (\ref{add4}) for the adjoints of $T_1$ and $T_2$. The action of these operators on $\eta_x$ and $\eta^x$ can be deduced and it turns out that
\be
T_1\eta_x=e^{i\alpha x}\eta_x, \qquad T_1^\dagger\eta^x=e^{-i\alpha x}\eta^x,
\label{57}\en
with obvious identities for $T_1^{-1}\eta_x$ and $ (T_1^\dagger)^{-1}\eta^x$. These formulas easily follow from  (\ref{59}) and from the eigenvalue equation $\hat x \eta_x=x\,\eta_x$.  As for $T_2^\sharp$, we can check, see Appendix 2, that
\be
T_2\eta_x=\eta_{x+\alpha}, \qquad T_2^\dagger\eta^x=\eta^{x-\alpha},
\label{58}\en
so that $T_2^{-1}\eta_x=\eta_{x-\alpha}$ and $(T_2^\dagger)^{-1}\eta^x=\eta^{x+\alpha}$. From (\ref{57}) and (\ref{58})  standard arguments show that, for all $f\in\Hil$,
\be
\left\{
\begin{array}{ll}
T_1 f^\uparrow(x)=e^{i\alpha x} f^\uparrow(x), \qquad T_2 f^\uparrow(x)= f^\uparrow(x-\alpha), \\
 T_1^\dagger f^\downarrow(x)=e^{-i\alpha x} f^\downarrow(x), \qquad T_2^\dagger f^\downarrow(x)= f^\downarrow(x+\alpha)\\
\end{array}%
\right.
\label{510}
\en
With this in mind, extending what is discussed in the Appendix 1, it is possible to find the generalized eigenstates  of $T_1$ and $T_2$, and of $T_1^\dagger$ and $T_2^\dagger$. If we put
\be
\Phi_{kq}^\uparrow(x)=\frac{1}{\sqrt{\alpha}}\sum_{n\in\Bbb Z}e^{ikn\alpha}\delta(x-q-n\alpha)=\left<\eta^x,\Phi_{kq}\right>,
\label{511}\en
for $k,q\in[0,\alpha[$, then it is clear that they are not square integrable in $\Bbb R$. Still, they have interesting features. In fact, using (\ref{510}) and a bit of algebra, we get
\be
T_1\Phi_{kq}^\uparrow(x)=e^{i\alpha q}\Phi_{kq}^\uparrow(x),\qquad T_2\Phi_{kq}^\uparrow(x)= e^{-i\alpha k}\Phi_{kq}^\uparrow(x).
\label{512}\en
In a similar way, introducing the distributions
\be
\Psi_{kq}^\downarrow(x)=\left<\eta_x,\Psi_{kq}\right>=\Phi_{kq}^\uparrow(x),
\label{513}\en
we find that
\be
T_1^\dagger\Psi_{kq}^\downarrow(x)=e^{-i\alpha q}\Psi_{kq}^\downarrow(x),\qquad T_2^\dagger\Psi_{kq}^\downarrow(x)= e^{i\alpha k}\Psi_{kq}^\downarrow(x).
\label{514}\en

\vspace{2mm}

{\bf Remark:--} It might be worth noticing that, even if $\Psi_{kq}^\downarrow(x)=\Phi_{kq}^\uparrow(x)$, in general $\Psi_{kq}\neq\Phi_{kq}$. This is because $\eta_x$ and $\eta^x$ are different. This will be clarified in Section \ref{sectRBCS}.

\vspace{2mm}

Now, in complete analogy with what stated in the Appendix 1, calling $\Box=\{(k,q)\in{\Bbb R}^2: k,q\in[0,\alpha[\}$, we can check the following results
\be
\int_{\Bbb R} \overline{\Psi^\downarrow_{kq}(x)}\Phi^\uparrow_{k'q'}(x)dx=\int_{\Bbb R} \overline{\Psi^\downarrow_{kq}(x)}\Psi^\downarrow_{k'q'}(x)dx=\int_{\Bbb R} \overline{\Phi^\uparrow_{kq}(x)}\Phi^\uparrow_{k'q'}(x)dx=\delta(k-k')\delta(q-q'),
\label{515}\en
and
\be
\int\int_{\Box} \overline{\Psi^\downarrow_{kq}(x)}\Phi^\uparrow_{kq}(x')dk\,dq=\int\int_{\Box} \overline{\Phi^\uparrow_{kq}(x)}\Psi^\downarrow_{kq}(x')dk\,dq=\delta(x-x').
\label{516}\en
From (\ref{515}) the following equalities can be deduced:
\be
\left<\Psi_{kq},\Phi_{k'q'}\right>=\left<S_\eta\Psi_{kq},\Psi_{k'q'}\right>=\left<S^\eta\Phi_{kq},\Phi_{k'q'}\right>=
\delta(k-k')\delta(q-q'),
\label{517}\en
while from (\ref{516}) we deduce that
\be
\int\int_{\Box}|\Psi_{kq}\left>\right<\Phi_{kq}|dk\,dq=\int\int_{\Box}|\Phi_{kq}\left>\right<\Psi_{kq}|dk\,dq=\1.
\label{518}\en
Since we also have
\be
\int\int_{\Box} \overline{\Psi^\downarrow_{kq}(x)}\Psi^\downarrow_{kq}(x')dk\,dq=\int\int_{\Box} \overline{\Phi^\uparrow_{kq}(x)}\Phi^\uparrow_{kq}(x')dk\,dq=\delta(x-x'),
\label{add3}\en
it is natural to introduce formally\footnote{In fact, these operators could be unbounded, and a certain mathematical care is needed  for a rigorous definition. However, since they will play no role in the rest of the paper, we will skip this point here.} here also the operators
\be
R_\Psi=\int\int_{\Box}|\Psi_{kq}\left>\right<\Psi_{kq}|dk\,dq, \qquad R_\Phi=\int\int_{\Box}|\Phi_{kq}\left>\right<\Phi_{kq}|dk\,dq.
\label{519}\en
Then
\be
\left<\eta^{x'},R_\Phi\,\eta^x\right>=\left<\eta_{x'},R_\Psi\,\eta_x\right>=\delta(x-x').
\label{520}\en
The conclusion is therefore that $(S_\eta,S^\eta)$ and $(R_\Psi,R_\Phi)$ share a similar behavior, in the following sense:
\be
S^\eta\Phi_{kq}=\Psi_{kq},\quad S_\eta\Psi_{kq}=\Phi_{kq}, \quad \mbox{while}\quad R_\Psi\eta_x=\eta^x,\quad R_\Phi\eta^x=\eta_x. \label{521}\en
Then, in a certain sense, $S^\eta$ and $R_\Psi$ can be thought as the inverses of $S_\eta$ and $R_\Phi$, respectively.

We end this section by observing that the equalities in (\ref{512}) and (\ref{514}) can be rewritten as follows:
\be
T_1\Phi_{kq}=e^{i\alpha q}\Phi_{kq},\quad T_2\Phi_{kq}= e^{-i\alpha k}\Phi_{kq},
\quad
T_1^\dagger\Psi_{kq}=e^{-i\alpha q}\Psi_{kq},\quad T_2^\dagger\Psi_{kq}= e^{i\alpha k}\Psi_{kq}.
\label{522}\en

\section{Completeness of BCS}\label{sectComplBCS}

It is well known that standard coherent states $\Phi(z)$ in (\ref{30}) are overcomplete. This is the essence of the resolution of the identity in (\ref{add1}), together with the fact that we can remove out of the set $\{\Phi(z), \,z\in\Bbb C\}$ some vector, $\Phi(z_j)$, $j=1,2,\ldots,N$ for instance, still getting a complete set. In past years, it has been discussed how to take a suitable countable subset of $\Bbb C$, ${\Bbb C}_{num}$, such that the discrete set of coherent states $\{\Phi(z_j),\, z_j\in {\Bbb C}_{num}\}$, is still complete. Results in this directions can be found in \cite{klau,perel,zak}.

We want to show that our previous analysis allows us to answer  a similar question also for our BCS. More explicitly we will show that, calling $z_{\bf n}=\frac{\alpha}{\sqrt{2}}(n_1+in_2)$, where ${\bf n}=(n_1,n_2)$ and $n_1, n_2\in\Bbb Z$, the completeness of the sets $\{\varphi(z), \,z\in\Bbb C\}$ and $\{\Psi(z), \,z\in\Bbb C\}$ following from Proposition \ref{prop1} is still true for the two sets $\C_\varphi=\{\varphi(z_{\bf n}), \,{\bf n}\in {\Bbb Z}^2\}$ and $\C_\Psi=\{\Psi(z_{\bf n}), \,{\bf n}\in {\Bbb Z}^2\}$, at least if $L=1$ (i.e. if $\alpha^2=2\pi$), while is false for $L=2,3,4,\ldots$.

Let us assume, for the moment, that $\alpha^2=2\pi L$. Then $U(z_{\bf n})=(-1)^{Ln_1n_2}T_1^{ n_2}T_2^{ n_1}$. Let $f\in\Hil$ be orthogonal to all the $\varphi(z_{\bf n})$, ${\bf n}\in {\Bbb Z}^2$: $\left<f,U(z_{\bf n})\varphi_0\right>=0$, $\forall\,{\bf n}\in {\Bbb Z}^2$. This implies that $\left<f,T_1^{n_2}T_2^{n_1}\varphi_0\right>=0$, $\forall\, n_j\in\Bbb Z$. Then, because of the (\ref{518}), we have
$$
0=\left<f,T_1^{n_2}T_2^{n_1}\varphi_0\right>=\int\int_{\Box}\left<f,\Phi_{kq}\right>\left<\Psi_{kq},T_1^{n_2}T_2^{n_1}\varphi_0
\right>\,dk\,dq=$$
$$=\int\int_{\Box}\left<f,\Phi_{kq}\right>\left<{T_2^{n_1}}^\dagger{T_1^{n_2}}^{\dagger}\Psi_{kq},\varphi_0
\right>\,dk\,dq=\int\int_{\Box}e^{i\alpha(qn_2-kn_1)}\left<f,\Phi_{kq}\right>\left<\Psi_{kq},\varphi_0\right>\,dk\,dq.
$$
Notice that, if $L=1$, the set $\E:=\{e^{i\alpha(qn_2-kn_1)},\, n_j\in\Bbb Z\}$ is complete in $\Lc^2(\Box)$. Hence we must have $\left<f,\Phi_{kq}\right>\left<\Psi_{kq},\varphi_0\right>=0$.
 Now, $\left<\Psi_{kq},\varphi_0\right>$ cannot be zero almost everywhere (a.e.) in $k$ and $q$ since otherwise $\varphi_0$ would be the zero vector, which is impossible. Hence $\left<f,\Phi_{kq}\right>=0$ a.e. and, consequently, $f=0$, which is what we had to prove. Of course, this conclusion is false if $L=2,3,4,\ldots$, since in this case the set $\E$ is not complete in  $\Lc^2(\Box)$.

Similar arguments can be repeated to prove that, if $L=1$, $\C_\Psi$ is complete in $\Lc^2(\Box)$, while, if $L>1$, it is not.

\section{A particular case: regular $\D$-PBs and Riesz BCS}\label{sectRBCS}

We will now consider the case of regular $\D$-PBs, i.e. the case in which the sets $\F_\varphi$ and $\F_\Psi$ are biorthogonal Riesz bases: then a bounded invertible operator $S$ exists (which we assume, without loss of generality, to be self-adjoint), with $S^{-1}$ bounded, such that $\varphi_n=Se_n$ and $\Psi_n=S^{-1}e_n$, where $\{e_n, \,n\geq0\}$ is an orthonormal basis for $\Hil$. In several explicit models involving $\D$-PBs each $e_n\in\D$ and $\D$ is stable under the action of $S$ and $S^{-1}$. We will work under these simplifying assumptions here. This case has been analyzed in details in \cite{bagbjalowieza}, where, among other things, it was proven that $\varphi(z)$ and $\Psi(z)$ are well defined in $\Hil$ for all $z\in\Bbb{C}$, that $U(z)$ and $V(z)$ are bounded operators and that $\varphi(z)=U(z)\varphi_0=S\Phi(z)$, while $\Psi(z)=V(z)\Psi_0=(S^{-1})^\dagger\Phi(z)$. This means that $(\varphi(z),\Psi(z))$, $z\in\Bbb C$,  are  Riesz BCS (RBCS), see \cite{bagbjalowieza}, so that they satisfy in particular the following properties:

(1) $\forall\,z\in\Bbb C$ $$\left<\varphi(z),\Psi(z)\right>=1.$$

(2) For all $f,g\in\Hil$ the following  resolution of the identity holds:
\be
\left<f,g\right>=\frac{1}{\pi}\int_{\Bbb C}d^2z \left<f,\varphi(z)\right>\left<\Psi(z),g\right>
\label{61}\en

(3) $\forall\,z\in\Bbb C$
\be
a\,\varphi(z)=z\,\varphi(z),\qquad b^\dagger\Psi(z)=z\,\Psi(z)
\label{62}\en

\vspace{2mm}

Our aim in this section is to show that, for regular $\D$-PBs, the assumptions required in Section \ref{sectBCS} are indeed satisfied. First of all, since $\|\varphi_n\|=\|Se_n\|\leq \|S\|$ and $\|\Psi_n\|=\|S^{-1}e_n\|\leq \|S^{-1}\|$, both these norms are uniformly bounded. Hence, the inequalities required in Proposition \ref{prop1} are surely satisfied. This is in agreement with what stated above, that is that $\varphi(z)$ and $\Psi(z)$ are well defined in $\Hil$ for all $z\in\Bbb{C}$. Now, it is useful to observe that, in the present case, calling, as in Section \ref{sectkqrepr}, $\hat x_0=\frac{c+c^\dagger}{\sqrt{2}}$ and $\hat p_0=\frac{c-c^\dagger}{\sqrt{2}\,i}$, these two self-adjoint operators are related to $\hat x$ and $\hat p$ by a similarity transformation:
\be
\hat x f=S\hat x_0 S^{-1}f, \qquad \hat p f=S\hat p_0 S^{-1}f,
\label{63}\en
for all $f\in\D$. Of course, this implies that $\D$ is stable under the action of $\hat x$ and $\hat p$. Suppose now that, in a distributional sense, $\xi_x$ belongs to the domain of $S$ and $S^{-1}$. More explicitly, we assume that, for all $x\in\Bbb R$, $S\xi_x$ and $S^{-1}\xi_x$ both exist and belong to $\Sc'(\Bbb R)$. Then we define
\be
\eta_x=S\xi_x \qquad \mbox{and} \quad \eta^x=S^{-1}\xi_x,
\label{64}\en
and we can easily see that the set $\F_\eta$ is well-behaved in the sense of Definition \ref{defwb}. In fact, assuming that (\ref{63}) can be extended to $\Sc'(\Bbb R)$, we find $\hat x\,\eta_x=S\hat x_0 S^{-1} S\xi_x=S\hat x_0\xi_0=x\,\eta_x$. Also,
$$
\left<\eta_x,\eta^y\right>=\left<S\xi_x,S^{-1}\xi_y\right>=\left<\xi_x,\xi_y\right>=\delta(x-y).
$$
Finally, taken $f\in\Hil$, we have for instance
$$
\int_{\Bbb R}dx\left<\eta_x,f\right>\eta^x=\int_{\Bbb R}dx\left<S\xi_x,f\right>S^{-1}\xi_x=S^{-1}\int_{\Bbb R}dx\left<\xi_x,Sf\right>S^{-1}\xi_x=S^{-1}(Sf)=f,
$$
where the fact that $S^{-1}$ is bounded (and therefore continuous) has been used.

Recalling now our definitions of $f^\uparrow(x)$ and $f^\downarrow(x)$ in (\ref{53}) we see that $f^\uparrow(x)=\left<\eta^x,f\right>=\left<\xi_x,S^{-1}f\right>=S^{-1}f(x)$, while $f^\downarrow(x)=\left<\eta_x,f\right>=\left<\xi_x,Sf\right>=Sf(x)$. Therefore, for all $f\in\Hil$, $f^\downarrow(x)=S^2 f^\uparrow(x)$. It is further possible to check that the operators $S_\eta$ and $S^{\eta}$ are both bounded, so that their domains coincide with the whole $\Hil$. In fact,
$$
S_\eta f=\int_{\Bbb R}\left<\eta_x,f\right>\eta_x=\int_{\Bbb R}\left<S\xi_x,f\right>S\xi_x=S\int_{\Bbb R}\left<\xi_x,Sf\right>\xi_x=S^2f.
$$
Then $S_\eta=S^2$, and in a similar way we find that $S^\eta=S^{-2}$. Hence, as stated before, these operators are the inverse one of the other, and they are both everywhere defined. Of course, Proposition \ref{prop2} implies that $f^\uparrow(x),f^\downarrow(x)\in\Lc^2(\Bbb R)$.

Formulas (\ref{57}) and (\ref{58}) can also be explicitly checked. Finally, $S$ relates  the generalized vectors $\rho_{kq}$, $\Psi_{kq}$ and $\Phi_{kq}$ introduced before and in the Appendix 1, as follows:
\be
\rho_{kq}=S\Psi_{kq}=S^{-1}\Phi_{kq}.
\label{65}\en
In particular, this means that $\Phi_{kq}=S^2\Psi_{kq}=S_\eta\Psi_{kq}$, which is in agreement with formula (\ref{521}).
We conclude that the case of regular BCS fits perfectly in our framework, and all the quantities introduced in Section III can be easily identified.

\section{Conclusions}\label{sectconcl}

We have considered the possibility of extending the $kq$-representation to operators $\hat x$ and $\hat p$ arising from suitable combinations of pseudo-bosonic operators $a$ and $b$. We have shown that, under natural conditions, such an extension is possible and shares similar properties with its standard  version. The price we have to pay is to deal with biorthogonal (in $\Mc$) sets of vectors. This is not a big surprise, since biorthogonality is somehow intrinsic in any physical model described by non self-adjoint Hamiltonians, \cite{baginbagbook,ben,mosta}.

 We have also used this extended $kq$-representation to prove that the overcomplete sets of BCS constructed out of $a$ and $b$ contain two discrete subsets  which are complete in $\Hil$. The particular case of RBCS has been discussed in details.

Next steps in our analysis will include the detailed analysis of explicit examples of the $kq$-representation discussed here, and its possible applications to physical problems, especially in the realm of pseudo-hermitian quantum mechanics. Another interesting aspect, to be considered in more details, has to do with the position and momentum representations of quantum mechanics when the (generalized) position and/or  momentum operator are not self-adjoint.

\section*{Acknowledgements}

The author acknowledges partial support from Palermo University and from G.N.F.M. of the INdAM.

 \appendix

\renewcommand{\theequation}{\Alph{section}.\arabic{equation}}

 \section{\hspace{-.7cm}ppendix 1:  The kq-representation for $\hat x_0$ and $\hat p_0$}

The relevance of the $kq$-representation in many-body physics has
been established since its first appearances, \cite{zak}, becoming later also an interesting mathematical subject of interest, \cite{jan,dau2}. To keep the paper self-contained, we list here  few definitions and results on the $kq$-representation, referring to the cited papers, and to references therein, for more details.

Let $\hat x_0$ and $\hat p_0$ be the self-adjoint position and momentum operators, satisfying $[\hat x_0,\hat p_0]=i\1$. We define the unitary operators $\tau_1=e^{i\alpha\hat x_0}$ and $\tau_2=e^{-i\alpha\hat p_0}$, where $\alpha^2=2\pi L$, for some $L=1,2,3,\ldots$. Then $[\tau_1,\tau_2]=0$. Also, if $f(x)\in\Lc^2({\Bbb R})$, then
$$
\tau_1f(x)=e^{i\alpha x}f(x),\qquad \tau_2f(x)=f(x-\alpha).
$$
The $kq$-representation makes use of the fact that, since $\tau_1$ and $\tau_2$ commute, they can be diagonalized simultaneously. However, the common eigenstates,

\be \rho_{kq}(x)=\frac{1}{\sqrt{\alpha}}
\sum_{n\in\Bbb{Z}}e^{ikn\alpha}\delta(x-q-n\alpha), \quad\quad k,q\in[0,\alpha[,
 \label{a1} \en
do not belong to $\Lc^2({\Bbb R})$, as it is clear, but can still be thought as elements of the space of tempered distributions $\Sc'(\Bbb R)$. For this reason, and since
\be
\tau_1\rho_{kq}(x)=e^{i\alpha q}\rho_{kq}(x),\qquad \tau_2\rho_{kq}(x)=e^{-i\alpha k}\rho_{kq}(x),
\label{a2}\en
they are called {\em generalized eigenstates} of $\tau_1$ and $\tau_2$. If
$\xi_x$ is the generalized eigenvector of the position operator
$\hat x_0$, $\hat x_0\,\xi_x=x\,\xi_x$, we can write $\rho_{kq}(x)$
as $\rho_{kq}(x)=<\xi_x,\rho_{kq}>$. If we now put $$\Box=\{(k,q)\in{\Bbb R}^2: k,q\in[0,\alpha[\},$$
it is possible to check that
\be
\int\int_{\Box} \overline{\rho_{kq}(x)}\rho_{kq}(x')dk\,dq=\delta(x-x'),
\label{a3}\en
and that
\be
\int_{\Bbb R} \overline{\rho_{kq}(x)}\rho_{k'q'}(x)dx=\delta(k-k')\delta(q-q').
\label{a4}\en
In terms of the $\rho_{kq}$ the previous formulas can be written as:
\be
\left<\rho_{kq},\rho_{k'q'}\right>=\delta(k-k')\delta(q-q'), \quad \int\int_{\Box} |\rho_{kq}\left>\right<\rho_{kq}|dk\,dq=\1,
\label{a5}\en
and
\be
\tau_1\rho_{kq}=e^{i\alpha q}\rho_{kq},\qquad \tau_2\rho_{kq}=e^{-i\alpha k}\rho_{kq}.
\label{a6}\en
As already said, the states $\rho_{kq}(x)$ can also be used  to define a new
representation of the wave functions by means of the integral
transform $Z: \Lc^2(\Bbb{R})\rightarrow \Lc^2(\Box)$, defined as follows: \be
h(k,q):=(ZH)(k,q):=\int_{\Bbb{R}}dx
\overline{\rho_{kq}(x)}H(x), \label{a7} \en for all
functions $H(x)\in \Lc^2(\Bbb{R})$. The result is a function
$h(k,q)\in \Lc^2(\Box)$.

Of course, to be more rigorous, $Z$ should be defined first on the functions of
${\cal C}_o^\infty(\Bbb{R})$ and then extended to $\Lc^2(\Bbb{R})$
using its continuity. This is the approach considered in \cite{jan}, to whom we refer for more mathematical details.

 \appendix

 \section{\hspace{-.7cm}ppendix 2:  formula (\ref{58})}

To check (\ref{58}) it is convenient to introduce, other than $\F_\eta$ and $\F^\eta$, the sets of generalized eigenstates  of $\hat p$ and $\hat p^\dagger$. We first assume that a tempered distribution $\theta_p$ exists such that
\be
\hat p\,\theta_p=p\,\theta_p,
\label{ap1}
\en
for all $p\in\Bbb R$, and we call $\F_\theta$ the set of all these distributions. Now we assume that $\F_\theta$ is well-behaved. Hence a second set $\F^\theta$ of tempered distributions $\theta^p$ also exists, $p\in \Bbb R$, such that  $\left<\theta_p,\theta^q\right>=\delta(p-q)$ and $\int_{\Bbb R}dp|\theta_p\left>\right<\theta^p|=\int_{\Bbb R}dp|\theta^p\left>\right<\theta_p|=\1$. Here these resolutions hold in $\Mc$. We can check that $\theta^p$ is a generalized eigenstate of $\hat p^\dagger$: $\hat p^\dagger\,\theta^p=p\,\theta^p$. Moreover, as in (\ref{53}), we put
\be
\hat f^\uparrow(p):=\left<\theta^p,f\right>,\qquad \hat f^\downarrow(p):=\left<\theta_p,f\right>,
\label{ap2}\en
and we deduce that
\be
\hat f^\uparrow(p)=\int_{\Bbb R}\left<\theta^p,\eta_x\right>\hat f^\uparrow(x),\qquad \hat f^\downarrow(p)=\int_{\Bbb R}\left<\theta_p,\eta^x\right>\hat f^\downarrow(x).
\label{ap3}\en
It is natural to interpret $\hat f^{\uparrow,\downarrow}(p)$ as the Fourier transform of $f^{\uparrow,\downarrow}(x)$. For that, we put
\be
\left<\theta^p,\eta_x\right>=\left<\theta_p,\eta^x\right>=\frac{1}{\sqrt{2\pi}}\,e^{-ipx},
\label{ap4}\en
which is in agreement with the fact that $\left<\theta^p,\theta_q\right>=\delta(p-q)$:
$$
\left<\theta^p,\theta_q\right>=\int_{\Bbb R}\left<\theta^p,\eta_x\right>\left<\eta^x,\theta_q\right>\,dx=\frac{1}{2\pi}\int_{\Bbb R}e^{i(q-p)x}\,dx=\delta(p-q).
$$
Now, since $T_2\theta_p=e^{-i\alpha p}\theta_p$ and $\int_{\Bbb R}dp|\theta_p\left>\right<\theta^p|=\1$, using (\ref{ap4}) we have
$$
T_2\eta_x=T_2\left(\int_{\Bbb R}\left<\theta^p,\eta_x\right>\theta_p\,dp\right)=\int_{\Bbb R}\left<\theta^p,\eta_x\right>(T_2\theta_p)\,dp=\frac{1}{\sqrt{2\pi}}\int_{\Bbb R}e^{-i(x+\alpha) p}\theta_p\,dp=
$$
$$
=\int_{\Bbb R}\left<\theta^p,\eta_{x+\alpha}\right>\theta_p\,dp=\eta_{x+\alpha},
$$
which is true at least when $T_2$ can be moved inside the integral. This is the case, for instance,  in Section \ref{sectRBCS}. The other formula in (\ref{58}) can be proved in a similar way.

\vspace{2mm}

{\bf Remark:--} As in Section \ref{sectRBCS}, it is possible to prove the existence of $\theta_p$ and $\theta^p$ as in the case of regular $\D$-PBs. When $\D$-PBs are not regular, as in the Example discussed in Section \ref{sectkqrepr}, $\theta_p$ and $\theta^p$ should be explicitly deduced and the conditions required to these vectors should be checked. For the  non self-adjoint shifted harmonic oscillator, \cite{baginbagbook,bagharmosc}, this could be explicitly done easily. In other words, for this model both the coordinate and the momentum representations associated to the non self-adjoint operators $\hat x$ and $\hat p$ can be introduced.


\begin{thebibliography}{99}


\bibitem{baginbagbook} F. Bagarello, {\em Deformed canonical (anti-)commutation relations and non hermitian Hamiltonians}, in {Non-selfadjoint operators in quantum physics: Mathematical aspects}, F. Bagarello, J. P. Gazeau, F. H. Szafraniec and M. Znojil Eds., Wiley  (2015)

\bibitem{ben}  C. Bender, {\em Making Sense of Non-Hermitian Hamiltonians}, Rep. Progr.  Phys., {\bf 70},  947-1018 (2007)

\bibitem{mosta} A. Mostafazadeh, {\em Pseudo-Hermitian representation of Quantum Mechanics}, Int. J. Geom. Methods Mod. Phys. {\bf 7}, 1191-1306 (2010)

\bibitem{tri} D.A. Trifonov, {Pseudo-boson coherent and Fock states}, arXiv: quant-ph/0902.3744, Proceedings of the 9th International Workshop on Complex Structures, Integrability and Vector Fields, Sofia, August 2008, 241-250


\bibitem{bagpb1} F. Bagarello, {\em Pseudo-bosons, Riesz bases and coherent states}, J. Math. Phys., {\bf 50}, DOI:10.1063/1.3300804, 023531 (2010) (10pg)


\bibitem{abg2015} F. Bagarello, S. T. Ali, J. P. Gazeau, {\em ${\cal D}$-pseudo-bosons, Complex Hermite Polynomials, and Integral Quantization},
 SIGMA, {\bf 11} (2015), 078,  23 pages

\bibitem{bagbjalowieza} F. Bagarello, {\em Pseudo-bosons and Riesz bi-coherent states}, in {\em Geometric methods in physics}, Proceedings of the XXXIV Workshop, Bialowieza, July 2015, to appear



\bibitem{zak2} J. Zak, {\em Dynamics of Electrons in Solids in External Fields},
           Phys. Rev., {\bf 168} (1968) 686
\bibitem{zak3} J. Zak, {\em The kq-representation in the dynamics of electrons in solids}, Solid State Physics,
H. Ehrenreich, F. Seitz, D. Turnbull Eds., Academic, New York (1972), Vol. 27
\bibitem{jan} Janssen A.J.E.M.,  {\em Bargmann transform, Zak transform, and Coherent states},
           J. Math. Phys., {\bf 23} (1982) 720

\bibitem{klau} V. Bargmann, P. Butera, L. Girardello, J. R. Klauder, {\em On the completeness of the coherent states}, Rep. Math. Phys., Vol. 2, No. 4, 221-228 (1971)

\bibitem{perel}  A. M. Perelomov, {\em On the completeness of a system of coherent states}, Teor. Mat. Fiz., {\bf 6},  No. 2, 213-224 (1971)


\bibitem{zak} H. Bacry, A. Grossmann, and J. Zak, {\em Proof of the
completeness of lattice states in
  $kq$ representation},  Phys. Rev. B, {\bf 12},  1118--1120,  (1975)



\bibitem{rudin} W. Rudin, {\em Real and complex analysis}, McGraw Hill, (1970)


\bibitem{chri} Christensen O., {\em An Introduction to Frames and Riesz Bases}, Birkh\"auser, Boston, (2003)


\bibitem{bagintop2016} F. Bagarello, {\em Intertwining operators for non self-adjoint Hamiltonians and bicoherent states},  J. Math. Phys., in press

\bibitem{gieres} F. Gieres, {\em Mathematical surprises and Dirac's formalism in quantum mechanics}, Rep. Prog. Phys., {\bf 63}, 1893 (2000)

\bibitem{mess} A. Messiah, {\em Quantum Mechanics}, vol. 1, North Holland Publishing Company, Amsterdam, (1967)


\bibitem{merz} E. Merzbacher, {\em Quantum Mechanics}, John Wiley and Sons, INC, New York (1961)






\bibitem{bagharmosc} F. Bagarello, {\em From self-adjoint to non self-adjoint harmonic oscillators: physical consequences
and mathematical pitfalls}, Phys. Rev. A, {\bf 88}, 032120 (2013)




\bibitem{dau2} Daubechies I., Grossmann A., Meyer Y., {\em Painless nonorthogonal expansions},
           J. Math. Phys., {\bf 27} (1986) 1271


























\end{thebibliography}
\end{document}